\newtheorem{definition}{Definition}
\newtheorem{theorem}{Theorem}
\providecommand{\keywords}[1]
{
	\small	
	\textbf{\textit{Keywords---}} #1
}
\date{}
\title{FSPVDsse: A Forward Secure Publicly Verifiable Dynamic SSE scheme}
\author{Laltu Sardar$^{1}$  and Sushmita Ruj$^{1,2}$\\
	$^{1}$Indin Statistical Institute, Kolkata, India\\
	$^{2}$CSIRO Data61, Australia\\
	E-mail: laltuisical@gmail.com, sushmita.ruj@csiro.au
}
\begin{document}
	\maketitle
	
	\begin{abstract}
		A symmetric searchable encryption (SSE) scheme allows a client (data owner) to search on encrypted data outsourced to an untrusted cloud server. 
		The search may either be a single keyword search or a complex query search like conjunctive or Boolean keyword search. Information leakage is quite high for dynamic SSE, where data might be updated.  It has been proven that to avoid this information leakage an SSE scheme with dynamic data must be \emph{forward private}. A  dynamic SSE scheme is said to be {forward private}, if adding a keyword-document pair does not reveal any information about the previous search result with that keyword.
		
		In SSE setting, the data owner has very low computation and storage power. In this setting, though some schemes achieve forward privacy with honest-but-curious cloud, it becomes difficult to achieve forward privacy when the server is malicious, meaning that it can alter the data. 
		Verifiable dynamic SSE requires the server to give a proof of the result of the search query. The data owner can verify this proof efficiently. 
		In this paper, we have proposed a generic publicly verifiable dynamic SSE (DSSE) scheme that makes any forward private DSSE scheme verifiable without losing forward privacy. The proposed scheme does not require any extra storage at owner-side and requires minimal computational cost as well for the owner. Moreover, we have compared our scheme with the existing results and show that our scheme is practical. 
		
	\end{abstract}
	
	 \keywords{Searchable encryption, Forward privacy,  Verifiability,  BLS signature, Cloud computing.}
	
	\section{Introduction}
	Data stored in untrusted servers is prone to attacks by the server itself. In order to protect confiential infomation, clients store encrypted data. This makes searching on data quite challenging. 
	%
	A searchable symmetric encryption (SSE) scheme enables a client or data owner to store its data in a cloud server without loosing the ability to search over them. 
	When an SSE scheme supports update, it is called a dynamic SSE (DSSE) scheme.
	
	There are plenty of works on SSE 
	as well as DSSE. 
	Most of them considers the cloud server to be honest-but-curious. An honest-but-curious server follows the protocol but wants to  extract information about the plaintext data and the queries. 
	However, if the cloud itself is malicious, it does not follow the protocol correctly. In the context of search, it can return only a subset of results, instead of all the records of the search. 
	So, there is need to verify the results returned by the cloud to the querier. 
	An SSE scheme for static data where the query results are verifiable is called Verifiable SSE (VSSE). 
	Similarly, if the data is dynamic the scheme is said to be a verifiable dynamic SSE (VDSSE).   
	
	There are single keyword search VSSE schemes which are either new constructions supporting verifiability or design techniques to achieve verifiability on the existing SSE schemes by proposing generic algorithm.
	VSSE with single keyword search has been studied in \cite{icc/ChaiG12}, \cite{ ccs/ChengYGZR15}, \cite{fgcs/LiuLLJL18}. In \cite{infocom/SunLLH015}, \cite{esorics/Wang0SLAZ18} etc., VSSE scheme with conjunctive query has been studied. 
	Moreover, there are also works that gives VDSSE scheme for both single keyword search (\cite{isci/MiaoWWM19}) as well as complex query search including fuzzy keyword search (\cite{trustcom/ZhuLW16}) and Boolean query (\cite{globecom/JiangZGL15}).
	However, Most of them are \emph{privately verifiable}. 
	A VSSE or VDSSE scheme is said to be privately verifiable if only querier, who receive search result, can verify it. On the other hand, a VSSE or VDSSE scheme is said to be \emph{publicly verifiable} if any third party, including the database owner, can verify the search result without knowing the content of it. 
	
	There is also literature on public verifiability. 
	Soleimanian and Khazaei~\cite{dcc/SoleimanianK19} and Zhang et al.~\cite{icws/0016XYL16} have presented SSE schemes which are publicly verifiable. 
	VSSE with Boolean range queries  has been studied by Xu et al.~\cite{corr/abs-1812-02386}. 
	Though, their verification method is public,  since the verification is based over blockchain databases, it has extra monetary cost. Besides, Monir Azraoui \cite{cns/AzraouiEOM15} presented a conjunctive search scheme that is publicly verifiable.
	In case of dynamic database, publicly verifiable scheme by  Jiang et al.~\cite{globecom/JiangZGL15} supports Boolean Query and that by Miao et al.~\cite{isci/MiaoWWM19} supports single keyword search.
	
	However, file-injection attack \cite{usenix/ZhangKP16}, in which the client encrypts and stores files sent by the server, recovers keywords from future queries, has forced researchers to think about dynamic SSE schemes to be forward private where adding a keyword-document pair does not reveal any information about the previous search result with that keyword. 
	In addition, in presence of malicious cloud server, the owner can outsource the verifiabilty to a third party auditor to reduce its computational overhead.
	The only forward private single keyword search VSSE scheme is proposed by Yoneyama and Kimura~\cite{icics/YoneyamaK17}. However, the scheme is privately verifiable and the owner requires significant amount of computation for verification.
	
	\subsection{Our Contribution}
	In this paper, we have contributed the followings in the literature of VSSE.    
	\begin{enumerate}
		\item We have formally define a verifiable DSSE scheme.  Then we have proposed a generic verifiable SSE scheme  ({\color{blue} $\Psi_s$})  which is very efficient and easy to integrate.
		\item We have proposed a generic publicly verifiable dynamic SSE scheme ({\color{blue} $\Psi_f$}). Our proposed scheme is forward private. This property is necessary to protect a DSSE scheme from file injection attack. However, no previous publicly verifiable scheme is forward private. In fact, only forward private scheme  \cite{icics/YoneyamaK17} is privately verifiable.
		\item We present formal security proofs for these schemes and shows that they are adaptively secure in random oracle model. 		   
	\end{enumerate}
	Both of the schemes do not uses any extra storage, at owner side, than the embedded schemes. Thus, for a resource constrained client, the schemes are very effective and efficient.
	
	In Table~\ref{tab_diff_schemes}, we have compared our proposed schemes with existing ones.
	\begin{table}[!htbp]
		\centering
		\caption{Different verifiable SSE schemes} \label{tab_diff_schemes} \medskip
		\resizebox{\linewidth}{!}{
			\begin{tabular}{|c|c|c|c|c|c|c|c|c|}
				\hline
				Data Type 	& 			\multicolumn{4}{|c|}{static} 				& 				\multicolumn{4}{|c|}{dynamic}				\\ \hline
				Query Type	& \multicolumn{2}{|c|}{single}&\multicolumn{2}{|c|}{complex}& \multicolumn{2}{|c|}{single}&\multicolumn{2}{|c|}{complex}\\ \hline
				Verification& private 		& public 	& private 		& public 		& private 	& public 		& private		& public \\ \hline
				Schemes		&  \cite{icc/ChaiG12},  \cite{ccs/ChengYGZR15}, \cite{fc/OgataK17}, \cite{fgcs/LiuLLJL18}, {\color{blue} $\Psi_s$}
				& \cite{dcc/SoleimanianK19}
				& \cite{esorics/Wang0SLAZ18}, \cite{ijisec/LiZQLX18}, \cite{corr/abs-1812-02386}
				& \cite{dcc/SoleimanianK19}
				& \cite{icics/YoneyamaK17}, \cite{iacr/BostFP16}
				& \cite{isci/MiaoWWM19}, { \color{blue} $\Psi_f$}
				& \cite{trustcom/ZhuLW16}
				& \cite{globecom/JiangZGL15} \\ \hline	
				Forward Private & \multicolumn{4}{|c|}{not applicable} &\multicolumn{4}{|c|}{ \cite{icics/YoneyamaK17}, \color{blue} $\Psi_f$}\\ \hline 	
			\end{tabular}
		}
	\end{table}
	
	\subsection{Organization}
	We have briefly described the works related to verifiable SSE in Section~\ref{sec_RelatedWorks}. 
	We have discussed the required preliminary topics in Section~\ref{sec_Preliminaries}.
	In Section~\ref{sec_VSSE}, we have presented a generic approach of verifiable SSE scheme.
	In Section~\ref{sec_proposedScheme}, we present our proposed generic construction of publicly verifiable DSSE scheme in details. 
	We have compared its complexity with similar publicly verifiable schemes in Section~\ref{sec_comparison}. 
	Finally, we summaries our work in Section~\ref{Conclusion} with possible future direction of research.

	\section{Related Works} \label{sec_RelatedWorks}
	The term \emph{Searchable Symmetric Encryption} is first introduced by Curtmola et al.~\cite{ccs/CurtmolaGKO06} where they have given formal definition of keyword search schemes over encrypted data.
	Later, Chase et al.~\cite{asiacrypt10} and  Liesdonk et al.~\cite{sdm10}  presented  single keyword search SSE for static database. 
	Thereafter, as the importance of database updating is increased, the work has been started on dynamic SSE.
	Kamara et al. \cite{ccs/KamaraPR12} first have introduced a dynamic single keyword search scheme based on encrypted inverted index. 
	There are remarkable works on single keyword search on dynamic database. 
	However, file-injection attack, by Zhang et al.~\cite{usenix/ZhangKP16} have forced the researchers to think about dynamic SSE schemes to be forward private. 
	It is easy to achieve forward privacy with ORAM. 
	However, due to large cost of communication, computation and storage, ORAM based schemes are almost impractical.
	
	In 2016 Bost~\cite{ccs/Bost16} has presented a non-ORAM based forward private dynamic SSE scheme. 
	Later, few more forward private schemes have been proposed. 
	Though, the works \cite{ccs/BostMO17}, \cite{ccs/SunYLSSVN18} etc. provide backward privacy, now we are not bother about it since there is no formal attack on non-backward private DSSE schemes.
	Though, till now there are no formal attack on non-backward private DSSE schemes, there are works \cite{ccs/BostMO17} and \cite{ccs/SunYLSSVN18} that provide backward privacy.
	In most of the above mentioned schemes, the cloud service providers are considered to be honest-but-curious.  
	However, the schemes fails to provide security in presence of malicious cloud server.
	
	Chai and Gong~\cite{icc/ChaiG12} have introduced the first VSSE scheme. They stores the set of document identifiers in a trie like data structure where each node corresponding to some keyword stores identifiers containing it.
	Cheng et al.~\cite{ccs/ChengYGZR15} have presented a VSSE scheme for static data based on  the secure indistinguishability obfuscation. Their scheme also supports 
	Boolean queries and provides publicly verifiability on the return result. 
	Ogata and Kurosawa~\cite{ccs/ChengYGZR15} have presented a no-dictionary generic verifiable SSE scheme. Cuckoo hash table is used here for this private verifiable scheme. With  multi-owner setting,  Liu  et al.~\cite{fgcs/LiuLLJL18} have presented a VSSE with aggregate keys.  Miao et al.~\cite{chinaf/MiaoMLZL17} presented a VSSE in same multi-owner setting.
	However, all of the above schemes were for static database and are privately verifiable where
	the VSSE schemes by Soleimanian and Khazaei~\cite{dcc/SoleimanianK19} and Zhang et al.~\cite{icws/0016XYL16} are publicly verifiable. 
	
	The above works are only for static data. There are few works also that deals with complex queries when the data is static. 
	Conjunctive query on static data has been studied by Sun et al.~\cite{infocom/SunLLH015}, Miao et al.~\cite{ppna/MiaoMWLWL17}, Wang et al.~\cite{esorics/Wang0SLAZ18}, Li et al.~\cite{ijisec/LiZQLX18},  Miao et al.~\cite{percom/MiaoMLJZSL17} etc. These schemes have private verifiability. Boolean range queries on SSE has been studied by Xu et al.~\cite{corr/abs-1812-02386}. Though, their verification method is public,  since the verification is based over blockchain databases it has good monetary cost. Besides, Monir Azraoui \cite{cns/AzraouiEOM15} presented a conjunctive search that is publicly verifiable.
	
	Dynamic verifiable SSE with complex queries also has been studied. Zhu et al.~\cite{trustcom/ZhuLW16} presented a dynamic fuzzy keyword search scheme which is  privately verifiable and 
	Jiang et al.~\cite{globecom/JiangZGL15} has studied Publicly Verifiable Boolean Query on dynamic database. 
	Moreover, single keyword search scheme  on dynamic data  is described by  Yoneyama and Kimura~\cite{icics/YoneyamaK17},
	Bost et al.~\cite{iacr/BostFP16} etc.
	
	A publicly verifiable SSE scheme is recently also proposed by Miao et al.~\cite{isci/MiaoWWM19}.  
	Yoneyama and Kimura~\cite{icics/YoneyamaK17} presented a scheme based on Algebraic PRF which is verifiable as well as forward private that performs single keyword search. However, the scheme is privately verifiable and the owner requires significant amount of computation for verification.
	
	Our proposed scheme $\Psi_f$ is generic forward private verifiable scheme which is compatible with any existing forward private DSSE scheme. 
	Our scheme also do not use any extra owner-storage for verifiability and has minimal search time computation for the owner.   
	
	\section{Preliminaries}	\label{sec_Preliminaries}
	
	\subsection{Cryptographic Tools}
	
	\subsubsection{Bilinear Map} \label{ss:BilinearMaps}
	Let $\mathbb{G}$ and $\mathbb{G}_T$ be two (multiplicative) cyclic groups of prime order $q$. Let $\mathbb{G}=<g>$. A map $\hat{e} :\mathbb{G} \times \mathbb{G} \rightarrow \mathbb{G}_T$ is said to be an \emph{admissible non-degenerate bilinear map} if--
	a) $\hat{e}(u^a,v^b) = \hat{e}(u,v)^{ab}$, $\forall u,v \in \mathbb{G}$ and $\forall a, b \in \mathbb{Z}$ (bilinearity)
	b) $ \hat{e}(g,g) \neq 1$ (non-degeneracy)
	c) $\hat{e}$ can be computed efficiently.
	
	\subsubsection{Bilinear Hash}
	Given a bilinear map $\hat{e}: \mathbb{G}\times \mathbb{G} \rightarrow \mathbb{G}_{T}$ and a generator $g$, a bilinear hash  $\mathcal{H} : \{0,1\}^* \rightarrow \mathbb{G} $ maps every random string to an element of $\mathbb{G}$. The map is defined as $\mathcal{H}(m) = g^m,\ \forall m\in \{0,1\}^* $.
	
	\subsubsection{Bilinear Signature (BLS)}
	Let $\hat{e}: \mathbb{G} \times \mathbb{G} \rightarrow \mathbb{G}_{T}$ be a  bilinear map where $|\mathbb{G}|=|\mathbb{G}_{T}|=q$, a prime and $ \mathbb{G}=<g>$. A bilinear signature (BLS) scheme $\mathcal{S}$=$(\mathtt{Gen}$, $\mathtt{Sign}$, $ \mathtt{Verify})$ is a tuple of three algorithms as follows.\vspace{-10pt}
	\begin{itemize}
		\item $ (sk,pk)\gets \mathtt{Gen}$: It selects $ \alpha \xleftarrow{\$}[0,q-1] $. It keeps the private key $sk=\alpha$. publishes the public key $ pk=g^{\alpha}$.
		\item $\sigma  \gets \mathtt{Sign} (sk,m)$: Given $ sk=\alpha$, and some message $m$, it outputs the signature $\sigma=(\mathcal{H}(m))^\alpha = {(g^m)}^\alpha$ where $\mathcal{H} : \{0,1\}^*\rightarrow\mathbb{G}$ is a bilinear hash.
		\item $\{0/1\} \gets \mathtt{Verify}(pk,m,\sigma)$: Return whether $ \hat{e}(\sigma ,g)= \hat{e}(\mathcal{H}(m),g^{\alpha})$
	\end{itemize}

	\subsection{System Model}
	In this section, we briefly describe the system model considered in this paper. In our model of verifiable SSE, there are three entities--Owner, Auditor and Cloud.
	The system model is shown in the Fig.~\ref{fig_systemModel}.  We briefly describe them as follows.  
	\begin{figure}[!htbp]
		\centering
		{\includegraphics[width=0.5\textwidth]{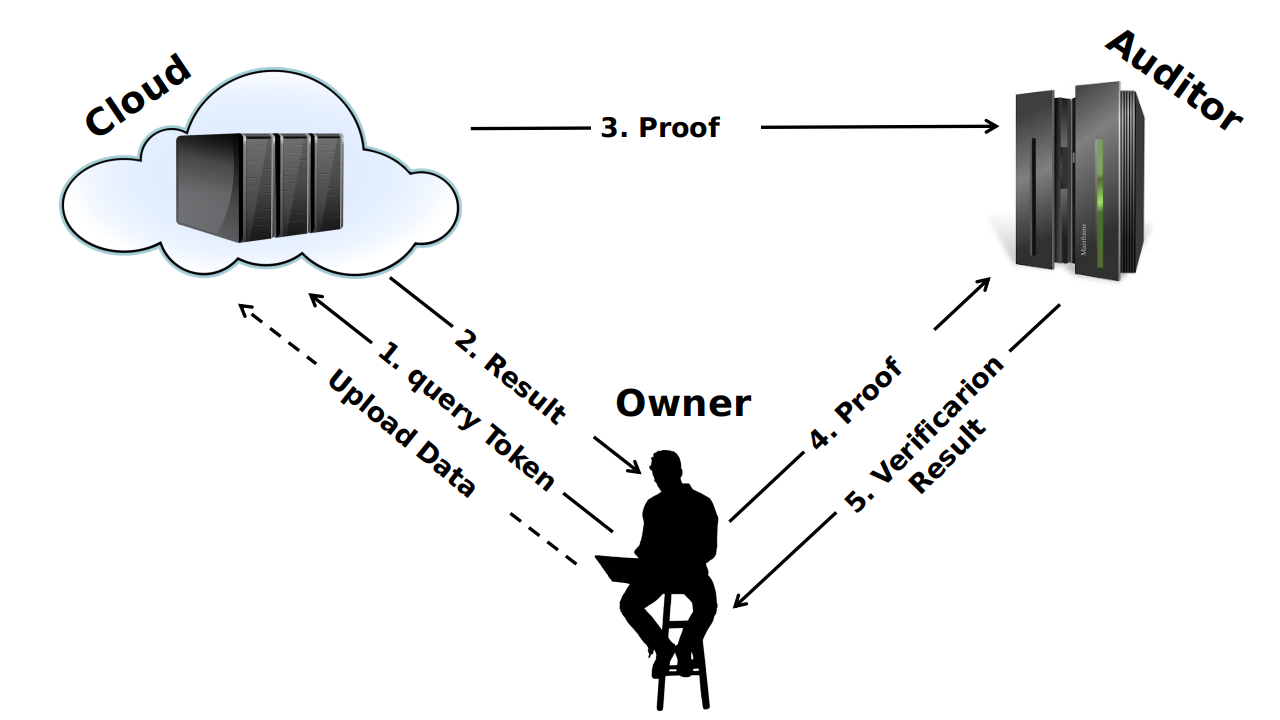}}
		\caption{The system model}
		\label{fig_systemModel}
	\end{figure}
	\begin{enumerate}
		\item {\bf Owner:} Owner is the owner as well as user of the database. It is considered to be \emph{trusted}. It builds an secure index, encrypts the data and then outsources both to the cloud. Later, it sends encrypted query to the cloud for searching. Therefore, it is the querier as well. It is the client who requires the service.
		\item {\bf Cloud:} Cloud or the cloud server is the storage and computation service provider. It stores the encrypted data sent from the owner and gives result of the query requested by it. The cloud is assumed to be \emph{malicious}. It can deviate from protocol by not only computing on, or not storing the data but also making the querier fool by returning incorrect result.
		\item {\bf Auditor:} Auditor is an \emph{honest-but-curious} authority which does not collude with the cloud. Its main role is to verify whether the cloud executes the protocol honestly. It tells the querier whether the returned result is correct or not.
	\end{enumerate}

	\subsection{Design Goals} 
	Assuming the above system model, we aim to provide solution of the verifiability problem of existing forward private schemes.  In our design, we take care to achieve the following objectives.
	\begin{enumerate}
		\item {\bf Confidentiality:} The cloud servers should not get any information about the uploaded data. On the other hand, queries should not leak any information about the database. Otherwise the cloud may get knowledge about the plaintext information. 
		
		\item  {\bf  Efficiency:} In our model, the cloud has a large amount of computational power as well as good storage. The owner is weak. So, in the scheme the owner should require significantly small amount of computation and storage cost while performing verifiability. 
		
		\item 	{\bf Scalability:} Since, the owner have to pay for the service provided by the cloud, it is desirable to outsource as much data as possible. The owner should capable to outsource large amount of data to the cloud. On the other hand, the cloud should answer the queries fast using less computation power.
		
		\item {\bf Forward privacy:} It is observed previously that a DSSE scheme without forward privacy is vulnerable to even honest-but-curious adversary. So, our target is to make a publicly verifiable DSSE scheme without loosing its forward privacy property.
	\end{enumerate}
	
	\subsection{Definitions}
	Let $\mathcal{W}$ be a set of keywords. $\mathcal{D} $ be the space of document identifiers and $\mathcal{DB}$ be the set of documents to be outsourced. 
	Thus, $\mathcal{DB} \subseteq \mathcal{D} $.  
	For each keyword $w\in \mathcal{W}$, the set of document identifiers that includes ${w}$ is denoted by $DB(w)=\{id^{w}_{1},id^{w}_{1}, \ldots, id^{w}_{c_w} \}$, where $c_w = |DB(w)|$ and $id^{w}_{i} \in \mathcal{DB} $. 
	Thus, $ \bigcup \limits _{w\in \mathcal{W}} DB(w) \subseteq \mathcal{DB} $. 
	Let $\overline{DB} = \{ {c_{id}}: id \in \mathcal{D} \}$ where $c_{id}$ denotes the encrypted document that has identifier $id$.
	
	We assume that there is a one-way function $H'$ that maps each identifier $id$ to certain random numbers. These random numbers is used as document name corresponding to the identifier. The function is can be computed by both the owner and cloud. However, from a document name, the identifier can not be recovered. Throughout, we use identifiers. However, when we say cloud returns documents to the owner, we assume the cloud performs the function on every identifiers before returning them.       
	
	Let, $H: \{0,1\}^* \rightarrow \{0,1\}^\lambda$ be a cryptographic hash function, 
	$ \mathcal{H}$ be a bilinear hash, 
	$R: \{0,1\}^* \rightarrow \{0,1\}^* $ be a PRNG and 
	$F : \{0,1\}^\lambda \times \{0,1\}^* \rightarrow \{0,1\}^\lambda $ be a HMAC.
	A \emph{stateful algorithm} stores its previous states and use them to compute the current state.

	\subsection{Verifiable Dynamic Searchable Symmetric Encryption (VDSSE)}
	An SSE scheme allows a client to outsource a dataset it owns to a cloud service provider in encrypted form without loosing the ability to perform query over the data. The most popular query is the keyword search where the dataset is a collection of documents. The client can retrieve partial encrypted data without revealing any meaningful information to the cloud. Throughout we take query as single keyword search query.
	
	
	A \emph{dynamic SSE} (DSSE) scheme is a SSE scheme that supports updates. 
	A \emph{Verifiable DSSE} (VDSSE) scheme is a DSSE scheme together with verifiability. 
	The verification can be done either by an external auditor or the owner. 
	The primary reason to bring a auditor is to reduce computational costs of verifiability at owner-side. This allows an owner to be lightweight.
	
	Though a VDSSE scheme supports update, we do not verify whether the cloud updates the database correctly or not. We only want to get the correct result with respect to current state of the database. If cloud updates the database incorrectly, it can not give the actual result. Due to verifiability, it will be failed in verification process to the auditor.  
	We define a verifiable DSSE scheme formally as follows.

	\begin{definition}[Verifiable Dynamic SSE] \label{def_vdsse}
		A verifiable dynamic SSE (VDSSE) scheme $\Psi$ is a tuple $(\mathtt{VKeyGen}$, $\mathtt{VBuild}$, $\mathtt{VSearchToken}$, $\mathtt{ VSearch}$, $\mathtt{VUpdateToken}$, $\mathtt{VUpdate})$ of algorithms defined as follows.
		\begin{itemize}	
			\item $K \leftarrow \mathtt{VKeyGen}(1^\lambda)$: It is a probabilistic polynomial-time (PPT)  algorithm run by the owner. Given security parameter $\lambda$ it outputs a key $K$.
			
			\item $(\overline{DB} , \gamma) \leftarrow \mathtt{VBuild}(K,\mathcal{DB})$: The owner run this PPT algorithm. Given a key $K$ and a set of documents $\mathcal{DB}$, it outputs the encrypted set of documents $\overline{DB}$ and an encrypted index $\gamma $.  
			
			\item $\tau_s \leftarrow \mathtt{VSearchToken}(K,w)$: On input a keyword $w$ and  the key $K$, the owner runs this PPT algorithm to output a search token $\tau_s$.
			
			\item $(R_w, \nu_w )\leftarrow  \mathtt{VSearch}(t_s,\gamma)$: It is a PPT algorithm run by the cloud and the auditor collaboratively that returns a set of document identifiers result $R_w$ to the owner with verification bit $\nu _w$. 
			
			\item $\tau_u \leftarrow  \mathtt{VUpdateToken}(K,id)$: It is a owner-side PPT  algorithm that takes the key $K$ and a document identifier $id$  and outputs a update token $\tau_u$. 
			
			\item $(\overline{DB}' ,\gamma ') \leftarrow  \mathtt{VUpdate}(\tau_u, op, \gamma, \overline{DB})$: It is a PPT algorithm run by the cloud. It takes an update token $\tau_u$, operation bit $op$, the encrypted document set $\overline{DB}$ and the index $\gamma$ and outputs updated  $(\overline{DB}', \gamma ')$.
			
		\end{itemize}
		
	\end{definition}
	
	\noindent \textbf{Computational Correctness}
	A VDSSE scheme $\Psi$ is said to be \emph{correct} if $\forall \lambda \in \mathbb{N}$, $\forall K$ generated using $\mathtt{KeyGen}(1^\lambda)$ and all sequences of  search and update operations on $\gamma$, every search outputs the correct set of identifiers, except with a negligible probability. 
	
	
	\medskip \noindent \textbf{Verifiability}
	Note that, when we are saying a scheme is verifiable, it means that it verifies whether the search result is from the currently updated state of the database according to the owner. Verification does not include update of the database at cloud side. For example, let an owner added a document with some keywords and the cloud does not update the database. Later, if the owner searches with some keywords present in the document and it should get the identifier of the document in the result set. Then, the result can be taken as verified. 
	
	\subsection{Security Definitions}
	We follow security definition of \cite{dcc/SoleimanianK19}. There are two parts in the definition-- confidentiality and soundness.
	We define security in adaptive adversary model where the adversary can send query depending on the previous results.
	Typically, most of the dynamic SSE schemes define its security in this model.
	
	A DSSE, that does not consider verifiability, considers honest-but-curious (HbC) cloud server. In these cases, The owner of the database allows some leakage on every query made. However, it guarantees that no meaningful information about the database are revealed other than the allowed leakages. Soundness definition ensures that the results received form the cloud server are correct.     
	
	\subsubsection{Confidentiality}
	Confidentiality ensures that a scheme does not give any meaningful information other than it is allowed.
	In our model, we have considered the cloud to be malicious. However, the auditor is HbC. Since, verifiability has some monetary cost for the owner, it wants verifiability only when it is required. Also the auditor does not have the database and search ability. Given the proof, it only verifies the result. 
	Thus, if the scheme is secure from cloud, it is so from auditor. Again, we have assumed that the cloud and the auditor do not collude. Hence, we do not consider the auditor in our definition of confidentiality.

	\begin{definition}[CKA2-Confidentiality] \label{def_Confidentiality}
		Let $\Psi =$ $(\mathtt{VKeyGen}$, $\mathtt{VBuild}$, $\mathtt{VSearchToken}$, $\mathtt{ VSearch}$, $\mathtt{VUpdateToken})$ be a verifiable DSSE scheme. Let $\mathcal{A}$, $\mathcal{C}$ and $\mathcal{S}$  be a stateful adversary, a challenger and a stateful simulator respectively. Let $\mathcal{L}$=$(\mathcal{L}_{bld}, \mathcal{L}_{srch}, \mathcal{L}_{updt} )$ be a stateful leakage algorithm.  
		Let us consider the following two games.
		
		\medskip \noindent 
		${\textbf{Real}}_ \mathcal{A}(\lambda)$:
		\begin{enumerate}
			\item The challenger $\mathcal{C}$ generates a key $K \leftarrow \mathtt{VKeyGen}(1^\lambda)$.
			\item $\mathcal{A}$ generates and sends $\mathcal{DB}$ to $\mathcal{C}$. 		
			\item $\mathcal{C}$ builds $(\overline{DB} , \gamma) \leftarrow \mathtt{VBuild}(K,\mathcal{DB})$ and sends $(\overline{DB} , \gamma) $ it to $\mathcal{A}$.			
			\item $\mathcal{A}$ makes a polynomial number of adaptive queries. In each of them, it sends either a search query for a keyword $w$ or an update query for a keyword-document pair $(w,id)$ and operation bit $op$ to $\mathcal{C}$.		
			\item $\mathcal{C}$ returns either a search token $\tau_s \leftarrow \mathtt{VSearchToken}(K,w)$ or an update token  $\tau_u \leftarrow  \mathtt{VUpdateToken}(K,id)$ to $\mathcal{A}$  depending on the query.   		
			\item Finally $\mathcal{A} $ returns a bit $ b $ that is output by the experiment. 
		\end{enumerate}
		
		\medskip \noindent ${\textbf{Ideal}}_ {\mathcal{A},\mathcal{S}}(\lambda)$:
		\begin{enumerate}
			\item $\mathcal{A}$ generates a set $\mathcal{DB}$ of documents and gives it to $\mathcal{S}$ together with $\mathcal{L}_{bld} (\mathcal{DB})$.
			\item $\mathcal{S}$ generates  $(\overline{DB} , \gamma)$  and sends it to $\mathcal{A}$
			\item $\mathcal{A}$ makes a polynomial number of adaptive queries $ q $. For each query,
			$\mathcal{S}$ is given either $ \mathcal{L}_{srch}(w,\mathcal{DB})$ or $\mathcal{L}_{updt}({op,w,id})$ depending on the query.
			\item $\mathcal{S}$ returns, depending on the query $q$, to $\mathcal{A}$ either search token $\tau_s $ or update token  $\tau_u $.  
			\item Finally $\mathcal{A} $ returns a bit $b'$ that is output by the experiment. 
		\end{enumerate}
		
		We say  $\Psi$ is $\mathcal{L}$-secure against adaptive dynamic chosen-keyword attacks if $\forall$ PPT adversary $\mathcal{A}$, $\exists$ a simulator $\mathcal{S}$ such that
		\begin{equation}
		|Pr[{\textbf{Real}}_ \mathcal{A}(\lambda)=1] - Pr[{\textbf{Ideal}}_ {\mathcal{A},\mathcal{S}}(\lambda)=1]| \leq \mu(\lambda)
		\end{equation}
		where $ \mu(\lambda)$ is negligible in $\lambda$.
	\end{definition}
	
	\subsubsection{Soundness}
	
	The soundness property ensures that if a malicious cloud tries to make the owner fool by returning incorrect result it will be caught to the auditor. We define game-based definition of soundness as follows.
	\begin{definition}
		Let  $\Psi$ be a verifiable DSSE scheme with $\Psi =$ $(\mathtt{VKeyGen}$, $\mathtt{VBuild}$, $\mathtt{VSearchToken}$, $\mathtt{ VSearch}$, $\mathtt{VUpdateToken})$. Let us consider the following game.
		
		\medskip \noindent ${\textbf{sound}}_ {\mathcal{A},{\Psi}}(\lambda)$:
		\begin{enumerate}
			\item The challenger $\mathcal{C}$ generates a key $K \leftarrow \mathtt{VKeyGen}(1^\lambda)$.
			\item $\mathcal{A}$ generates and sends $\mathcal{DB}$ to $\mathcal{C}$. 
			\item $\mathcal{C}$ computes $(\overline{DB} , \gamma) \leftarrow \mathtt{VBuild}(K,\mathcal{DB})$ and sends $(\overline{DB} , \gamma)$ to $\mathcal{A}$.		
			\item $\mathcal{A}$ makes a polynomial number of adaptive queries. In each of them, it sends either a search query for a keyword $w$ or an update query for a keyword-document pair $(w,id)$ and operation bit $op$ to $\mathcal{C}$.		
			\item $\mathcal{C}$ returns either a search token $\tau_s \leftarrow \mathtt{VSearchToken}(K,w)$ or an update token  $\tau_u \leftarrow  \mathtt{VUpdateToken}(K,id)$ to $\mathcal{A}$  depending on the query.   		
			\item After making polynomial number of queries, $\mathcal{A}$ chooses a target keyword $w$ and send search query to $\mathcal{C}$.
			\item  $\mathcal{C}$ returns a search token $\tau_s $. $\mathcal{A}$  executes and gets $(R_w, \nu_w )$ where  $\nu_w = accept$ is verification bit from $\mathcal{C}$.
			\item $\mathcal{A}$ generates pair $(R^*_w)$ for a keyword $w$ and gets verification bit $\nu^*_w=accept$.
			\item If $\nu^*_w=accept$ even when $ R^*_w \neq DB(w)$, $\mathcal{A}$ returns $1$ as output of the game, otherwise returns $0$.
		\end{enumerate}
		We say that  $\Psi$ is \emph{sound} if $\forall$ PPT adversaries $\mathcal{A}$, $Pr[{\textbf{sound}}_ {\mathcal{A},{\Psi}}(\lambda) = 1] \leq \mu(\lambda)$.
	\end{definition}
	
	\newpage
	\section{Verifiable SSE with static data} \label{sec_VSSE} 
	Since, in a verifiable SSE scheme, there is no update, it does not have $\mathtt{VUpdate}$ or $\mathtt{VUpdateToken}$ operation. 
	We present a generic scheme that will make any SSE scheme verifiable. 
	Our target is to achieve verifiability, in presence of malicious server, without loosing any other security property with minimal communication and computational costs. 
	
	\subsection{Issues with the existing verifiable SSE schemes}
	There are papers who considered static SSE schemes 
	and suggested authentication tag  generation using MAC to protect the integrity of the search result. 
	For each keyword $w$, they generates a tag  $tag_w  = H(id^w_1|| id^w_2|| \ldots || id^w_{c_w})$ where $H$ is a one-way hash function.   
	Trivially, if the tags are stored at the owner side then the scheme becomes privately verifiable. 
	In that case, when a search is required, the owner can check integrity after receiving the result from the cloud. 
	
	However, this integrity checking does not protect the SSE scheme from malicious adversary \emph{if the tags are outsourced} to the cloud. 
	Checking integrity provides security only from honest-but curious cloud servers. 
	Let us consider an example. 
	Suppose a keyword $w\in \mathcal{W}$ is searched and cloud gets the result $R_w = \{id^w_1, id^w_2, \ldots , id^w_{c_w}, tag_w\} $. 
	Later, if some other keyword $w'$ is searched, the cloud can return the same result and will pass the integrity checking. 
	
	\subsection{A generic verifiable SSE scheme without client storage}
	Since, it is desirable to outsource the data as well as tags to the cloud, the above result shows that checking integrity in the above way can not be considered. 
	It is easy to see that the scheme with checking integrity of the result identifiers are not enough because there is no binding of the keyword with the tags.  
	Here, we present a generic idea that makes any SSE scheme verifiable.
	\paragraph{Scheme Description}
	Let $\Sigma_s = (\mathtt{KeyGen}, \mathtt{Build},\mathtt{SearchToken},\mathtt{Search})$ be a result revealing static SSE scheme. 
	We present a VSSE scheme $\Psi_s$=$(\mathtt{VKeyGen}$, $\mathtt{VBuild}$, $\mathtt{VSearchToken}$, $\mathtt{VSearch})$ for static database as follows.

	Let $H$ be a one-way hash function and a key $K'$ is chosen at random. 
	For each keyword $w\in \mathcal{W}$, a key $k_w= H (K',w)$ is generated.
	$k_w$ is then used to bind the keyword with corresponding tag $tag_w=H({k_w||id^w_1|| id^w_2||\ldots|| id^w_{c_w}})$.
	Finally, for each keyword $w$, $\{id^w_1, id^w_2,\ldots, id^w_{c_w},tag_w \}$ is encrypted at build phase.
	Thus, while performing search with a keyword $w$, as search result, the owner receives $\{id'^w_1, id'^w_2,\ldots, id'^w_{c_w},tag'_w \}$. 
	The owner accepts it if the regenerated tag $tag'_w$ from the received identifiers matched with the received one.
	
	So, the main idea of the scheme is that instead of generating tags only with identifiers, they are bound with $k_w$ which is dependent on $w$ and can be computed by the owner only. After search, if the cloud returns incorrect set of document identifiers then the tag won't get matched.
	The scheme is shown in Fig.~\ref{fig_genericVSSE}. 
	
	Note that, for static case, computing tag is enough to validate a result. Since, one-way hash computation is very efficient and requires small amount of resource, we do not consider any external authority like auditor for verifiability. So, the scheme is privately verifiable.
	\begin{figure} [!htbp]
		\centering	\fbox{\noindent
			\resizebox{1.0\linewidth}{!}{
				\begin{minipage}{0.51\linewidth}	
					\underline{ $\Psi_s.\mathtt{VKeyGen}(1^\lambda)$}
					\begin{enumerate}
						\item $K_{\Sigma_s} \gets  \Sigma_s.\mathtt{KeyGen}(1^\lambda)$
						\item $K'\xleftarrow{\$} \{0,1\}^\lambda $ 
						\item  Return $K_{\Psi_s} = (K', K_{\Sigma_s})$ 
					\end{enumerate}
					
					\underline{ $\Psi_s.\mathtt{VBuild}(\mathcal{DB},K_{\Psi_s})$}
					\begin{enumerate}
						\item $ (K', K_{\Sigma_s}) \gets K_{\Psi_s} $ 			
						\item  \textbf{for} each $w \in \mathcal{W}$
						\begin{enumerate}
							\item $k_w \gets H(K'||w)$
							\item $tag_w \gets H(k_w||id^w_1 || id^w_2 ||\ldots || id^w_{c_w}) $
							\item $DB'(w) \gets DB(w) \cup \{tag_w \} $ 
						\end{enumerate}
						\item $ \mathcal{DB}' \gets \cup  _{w\in \mathcal{W}} DB'(w)  $ 
						\item $(\gamma, \overline{DB}) \gets \Sigma_s.\mathtt{Build} (\mathcal{DB}', K_{\Sigma_s})$  
						\item Return $(\gamma, \overline{DB}) $
					\end{enumerate}
				\end{minipage}
				\begin{minipage}{0.49\linewidth}			
					\underline{$\Psi_s.\mathtt{VSearchToken}(w, K_{\Sigma_s})$}		
					\begin{enumerate}
						\item $\tau_{\Sigma_s} \gets \Sigma_s. \mathtt{ SearchToken}(w, K_{\Sigma_s})  $
						\item Return $ \tau_{\Sigma_s}$
					\end{enumerate}
					
					\underline{ $\Psi_s.\mathtt{VSearch}(\gamma, \tau _{\Sigma_s} )$}
					\begin{enumerate}
						\item  $ (K', K_{\Sigma_s}) \gets K_{\Psi_s} $
						\item $\tau_{\Sigma_s} \gets \Sigma_s. \mathtt{ SearchToken}(w, K_{\Sigma_s})  $
						\item $R_w \gets  \Sigma_s. \mathtt{ Search}(\gamma,\tau_{\Sigma_s}) $  
						\item $k_w \gets H(K'||w) $  
						\item $  \{id'^w_1, id'^w_2,\ldots , id'^w_{c_w}, tag'_w\}  \gets R_w $  
						\item $tag_w \gets H(k_w||id'^w_1 || id'^w_2 ||\ldots || id'^w_{c_w}) $ 
						\item Accept $R_w$ if $tag'_w =tag_w$
					\end{enumerate}
					
				\end{minipage}
			} 
		} 
		\caption{Algorithm for generic verifiable SSE scheme $\Psi_s$ } \label{fig_genericVSSE}	
	\end{figure}
	%
	\paragraph{Cost for verifiability} 
	The cloud storage is increased by $|\mathcal{W}|$ tags. However, depending on the scheme the actual increment might be less than $|\mathcal{W}|$ tags but still it is asymptotically $O(|\mathcal{W}|)$.  
	The communication cost for verification is only increased by one tag from cloud the owner. 
	If we consider computation, to verify a search result, the owner only has to compute a hash value which is very little.   
	\paragraph{Soundness}
	In case the cloud does not want to perform search properly, then it can not get the identifiers and corresponding tag. So, it has to send either random identifiers or identifiers corresponding to other searched keyword. In both case, It cannot be passed verifiability test to the owner. 
	
	\paragraph{Confidentiality}
	The confidentiality of our proposed scheme follows from the security of the embedded SSE scheme.
	
	\section{Our Proposed Forward Secure Publicly Verifiable DSSE scheme} \label{sec_proposedScheme}
	In this section, we propose a simple generic dynamic SSE scheme which is forward secure as well as verifiable. Let  $\Sigma _f$ = $(\mathtt{KeyGen}$, $\mathtt{Build}$, $\mathtt{Search}$, $\mathtt{SearchToken}$, $\mathtt{Update}$, $\mathtt{UpdateToken}  )$ be a result revealing forward secure dynamic SSE scheme.  
	
	It is to be noted that any forward private SSE scheme stores the present state of the database at client side. 
	Corresponding to each keyword, most of them stores the number of documents containing it. Let $ C =\{c_w: w\in \mathcal{W} \}$ be the list of such numbers.
	
	Since, it considers any forward secure scheme $\Sigma_f$, it only adds an additional encrypted data structure to make the scheme verifiable.
	The algorithms of Our proposed scheme are given in Figure~\ref{fig_VDSSE_wcs1}. They are divided into three phases-- initialization, search and update.

	\begin{figure*}[!htbp]
		\frame{\includegraphics[width=\linewidth]{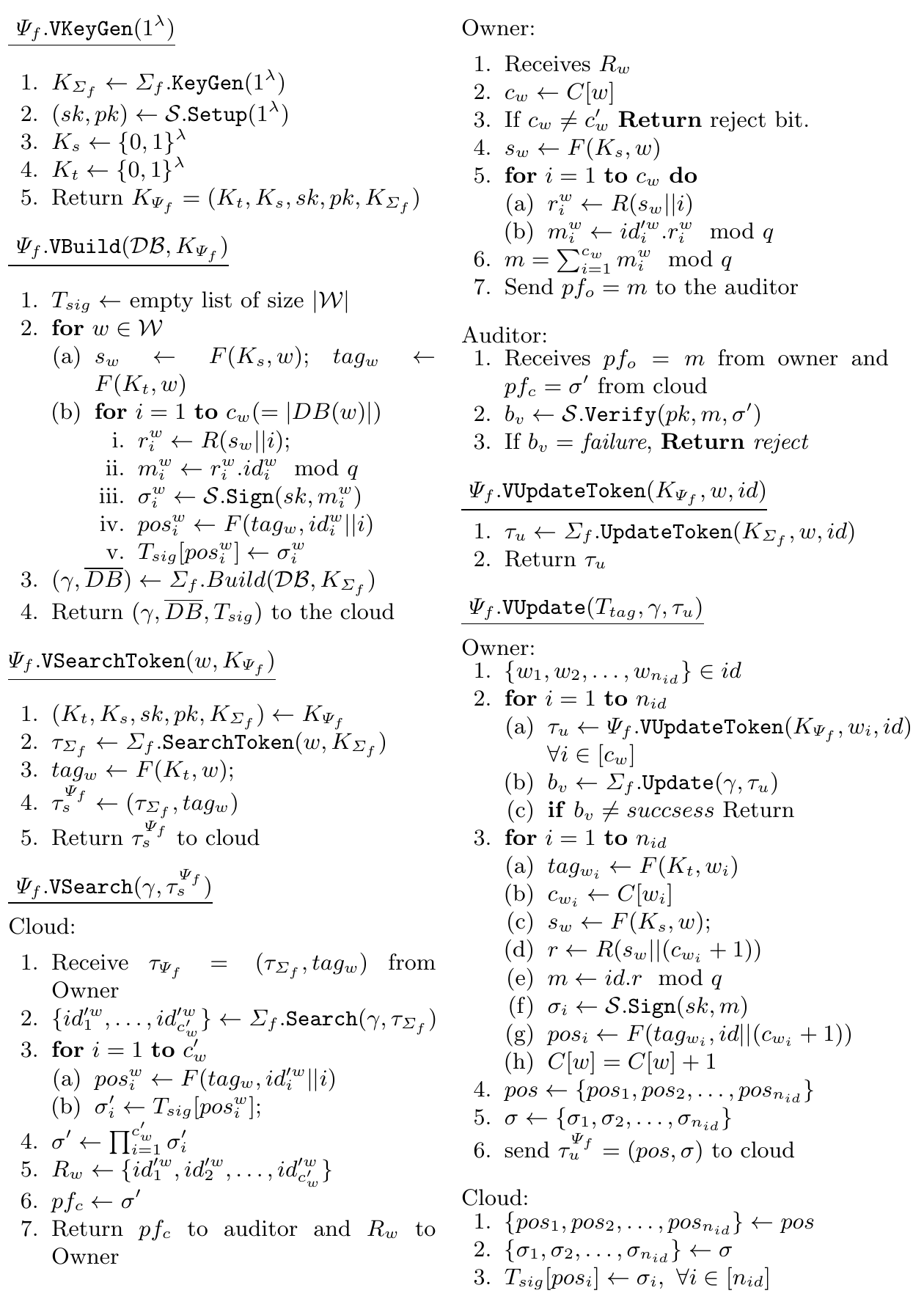}}
		\caption{ Generic verifiable dynamic SSE scheme $\Psi_f$ without extra client storage} \label{fig_VDSSE_wcs1}
	\end{figure*}
	
	\medskip \noindent \textbf{Initialization phase:} 
	In this phase, secret and public keys are generated by the owner and thereafter the encrypted searchable structure is built.  During key generation, three types of keys are generated-- $K_{\Sigma_f}$ for the $\Sigma_f$; $(sk, pk) $ for the bilinear signature scheme; and two random strings $K_s, K_t$ for seed and tag generation respectively. 
	
	Thereafter, a signature table $T_{sig}$ is generated, before building the secure index $\gamma$ and encrypted database $\overline{DB}$, to store the signature corresponding to each keyword-document pair. For each pair $(w, id^w_i) $, the position $pos^w_i = F(tag_w, id^w _i||i)$ is generated with a HMAC $F$. The position is actually act as key of a key-value pair for a dictionary. The document identifier is bounded with $pos^w_i$ together with $tag_w= F(K_t,w)$. The $tag_w$ is fixed for a keyword and is given to the server to find $pos^w_i$. The signature $\sigma^w_i$ for the same pair is also bounded with random number $r^w_i$ which can only be generated from PRG $R$ with the seed $s_w$. Then $(\sigma^w_i, pos^w_i) $ pair is added in the table $T_{sig}$ as key-value pair. After the building process, the owner outsources $\gamma$, $\overline{DB}$ and $T_{sig }$ to the cloud.   
	
	\medskip \noindent \textbf{Search Phase:} 
	In this phase, the owner first generates a search token $\tau_{\Sigma_f} $ to search on $\Sigma_f$. Then, it regenerates $tag_w$ and the seed $s_w$ and then, sends them to the cloud. 
	
	The cloud performs search operation according to $\Sigma_f$ and use the result identifiers $\{id_1 ,id_2 , \ldots id_{c'_w}\}$  to gets the position in $T_{sig}$ corresponding to each pair. It is not able to generate the positions if it does not search for the document identifiers. It collects the signatures stored in those positions, multiplies them and sends multiplication result to the auditor as its part  $pf_c$ of the proof. It sends the search result to the owner. 
	
	The owner first generates random numbers $\{r_1 ,r_2 , \ldots r_{c'_w}\}$  and regenerates aggregate message $m = \sum^{i=c'_w}_{i=1} r_i . id^w_i \mod q$ of the identifiers and sends $m$ to the auditor as $pf_o$, owner's part of the proof. 
	After receiving $pf_c$ and $pf_o$, the auditor only computes $ \mathcal{S}.\mathtt{Verify}(pk, m, \sigma')$. It outputs accept if signature verification returns \emph{success}.
	We can see that the no information about the search results is leaked to the auditor during verification. 
	
	\medskip \noindent \textbf{Update Phase:}
	In our scheme, while adding a document, instead of being updated only a keyword-document pair,  we assume that all such pairs corresponding to the document is added. To add a document with identifier $id$ and keyword set $\{w_1, w_2, \ldots ,w_{n_{id}}\} $, the owner generates the position and the corresponding signature for each containing keyword. The cloud gets them from the owner and adds them in the table $T_{sig}$.

	\medskip \noindent \textbf{Correctness}
	For correctness it is enough to check the following. 
	\begin{equation*}
	\hat{e}(\mathcal{H}(m),pk) = \hat{e}(g^m,g^\alpha)   = \hat{e}(g^{\alpha \sum m_i},g) = \hat{e}(\prod g^{\alpha m_i},g) = \hat{e}(\prod {\sigma_i},g)  = \hat{e}(\sigma,g) 
	\end{equation*}
	
	\medskip \noindent \textbf{Cost for verifiability}
	We achieve, forward privacy as well as public verifiability without client storage in $\Psi_f$. This increases the cloud-storage by $O(N)$, where $N$ is the number of document-keyword pairs. The proof has two parts one from the client and another from the owner. For a keyword $w$, the sizes of them are one group element and one random $\lambda$-bit string only. Thus Auditor receives one element from both. The owner has to compute  $R_w$ integer multiplication and addition, and then has to send one element.   
	
	\medskip \noindent \textbf{Forward privacy}
	We can see that while adding a document, it only adds some keyword-document pair, in the form of key-value pairs. So, During addition, the cloud server is adding key-value pairs in the dictionary. From these pairs, it can not guess the keywords present in it. Again, when it perform searches, it gets about the key (i.e., position on the table) only when it gets the identifiers.
	The one possibility to get the newly added key-value pair linked with the previous is if the added document gives the identifier of it. Since, the one-way function $H'$ gives the document-name of the adding document, the cloud server can not linked it with the previously searched keywords.

	\subsection{Security}
	The security of the scheme is shown in two parts-- confidentiality and soundness.
	
	\medskip \noindent \textbf{Soundness}
	The cloud server can cheat the owner in three ways by sending--
	\begin{enumerate}
		\item Incorrect number of identifiers-- but it is not possible as the owner keeps the number of identifiers.
		\item Same size result of other keywords-- $m$ is generated with a random numbers which can be generated only with the searched keyword and signatures are bound with that. So, the signature verification will be failed.
		\item Result with some altered identifiers-- since signatures are bounded with keywords and the random number, altering any will change $m$ and similarly the signature verification will be failed. 
	\end{enumerate}
	Thus the owner always will get the correct set of document identifiers.

	\subsubsection{Confidentiality}
	Let  $\mathcal{L}^{\Sigma_f} = (\mathcal{L}^{\Sigma_f}_{bld}, \mathcal{L}^{\Sigma_f}_{srch}, \mathcal{L}^{\Sigma_f}_{updt})$ the leakage function of $\Sigma_f$. Let $\mathcal{L}^{\Psi_f} = (\mathcal{L}^{\Psi_f}_{bld}, \mathcal{L}^{\Psi_f}_{srch}, \mathcal{L}^{\Psi_f}_{updt})$ be the leakage function of $\Psi_f$, given as follows.
	\begin{eqnarray*}
		\mathcal{L}^{\Psi_f}_{bld}(\mathcal{DB})  &=& \{ \mathcal{L}^{\Sigma_f}_{bld} (\mathcal{DB}), |T_{sig}|  \} \\
		\mathcal{L}^{\Psi_f}_{srch} (w) &=& \{ \mathcal{L}^{\Sigma_f}_{srch}(w),\{(id^w_i, pos^w_i, \sigma^w_i):i=1,2,\ldots,c_w \} \} \\
		\mathcal{L}^{\Psi_f}_{updt} (f) &=&\{ id,\{ (\mathcal{L}^{\Sigma_f}_{updt}(w_i,id), pos^{w_i}, \sigma^{w_i}):i=1,2,\ldots,n_{id} \} \}
	\end{eqnarray*}
	We show that $\Psi$ is $\mathcal{L}^{\Psi_f}$-secure against adaptive dynamic chosen-keyword attacks in the random oracle model, in the following theorem.
	\begin{theorem}
		If $F$ is a PRF, $R$ is a PRG and $\Sigma_f$ is $\mathcal{L}^{\Sigma_f} $-secure, then  $\Psi_f$ is $\mathcal{L}^{\Psi_f}$-secure against adaptive dynamic chosen-keyword attacks. 
	\end{theorem}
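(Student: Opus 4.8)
The plan is to prove CKA2-confidentiality (Definition~\ref{def_Confidentiality}) by the standard real/ideal simulation argument, layered on top of the simulator $\mathcal{S}_{\Sigma_f}$ that exists because $\Sigma_f$ is $\mathcal{L}^{\Sigma_f}$-secure. Our simulator $\mathcal{S}$ forwards the $\Sigma_f$-components of $\mathcal{L}^{\Psi_f}$ to $\mathcal{S}_{\Sigma_f}$ to obtain $(\overline{DB},\gamma)$ and the $\Sigma_f$-part of every search and update token, and fabricates the verifiability add-ons itself: the table $T_{sig}$, the tag $tag_w$ inside each search token, the positions $pos^w_i$, and the signatures $\sigma^w_i$. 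One ingredient is free: $\mathcal{S}$ runs the BLS $\mathtt{Gen}$ honestly and keeps $sk=\alpha$, so it can always compute genuine signatures (and the group elements placed in $T_{sig}$) exactly as the real owner would; hence no unforgeability/soundness reasoning is needed here, only distributional indistinguishability.

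I would bound $|\Pr[\mathbf{Real}_{\mathcal{A}}(\lambda)=1]-\Pr[\mathbf{Ideal}_{\mathcal{A},\mathcal{S}}(\lambda)=1]|$ through a short game chain. $G_0$ is $\mathbf{Real}_{\mathcal{A}}(\lambda)$. In $G_1$, replace every $tag_w=F(K_t,w)$ by a lazily sampled uniform $\lambda$-bit string (consistent per keyword); since $K_t$ is never exposed, the gap is at most an $F$-PRF advantage. In $G_2$, replace every per-keyword seed $s_w=F(K_s,w)$ by a uniform string (again an $F$-PRF advantage; note $s_w$ is used only owner-side, so it is not in $\mathcal{A}$'s view). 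In $G_3$, replace the masks $r^w_i$ read off the stream $R(s_w)$ by uniform field elements, at the cost of an $R$-PRG advantage; this makes the stored $\sigma^w_i$ independent uniform elements of $\mathbb{G}$. In $G_4$, replace the honest calls to $\Sigma_f$ by $\mathcal{S}_{\Sigma_f}$ driven solely by $\mathcal{L}^{\Sigma_f}$, at the cost of the $\mathcal{L}^{\Sigma_f}$-security advantage. After $G_4$ the adversary's entire view is a function of $\mathcal{L}^{\Psi_f}$ and $\mathcal{S}$'s coins alone, so $G_4\equiv\mathbf{Ideal}_{\mathcal{A},\mathcal{S}}(\lambda)$; summing the four bounds yields the negligible $\mu(\lambda)$.

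The one genuinely delicate step---and the reason the theorem is stated in the random-oracle model---is the \emph{adaptive consistency of $T_{sig}$}. At build time $\mathcal{S}$ learns only $|T_{sig}|$, yet at each search $\mathcal{L}^{\Psi_f}_{srch}(w)$ commits it to a token whose $tag_w$, once the malicious cloud recomputes $pos^w_i=F(tag_w,id^w_i\|i)$ from the returned identifiers, must index entries of the \emph{already outsourced} $T_{sig}$ that carry the leaked $\sigma^w_i$. The resolution is to model $F$ as a random oracle: $\mathcal{S}$ fills $T_{sig}$ at build time with $|T_{sig}|$ uniform key/value pairs, maintains a pool of untouched slots, and on the first search or update for a keyword picks enough fresh slots and programs the oracle so that the positions derived from the just-released $tag_w$ land exactly on them. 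Because $tag_w$ is a fresh uniform string, the adversary has, except with probability $O(\mathrm{poly}(\lambda)/2^\lambda)$, never queried $F$ on any input prefixed by $tag_w$, so the programming is undetectable; the stored values then match the real $\sigma^w_i$ in distribution by virtue of $G_3$. The identical bookkeeping serves the update leakage $\mathcal{L}^{\Psi_f}_{updt}$, and since that leakage is merely the $\Sigma_f$-update leakage together with fresh pseudorandom positions and signatures---nothing that links an addition to any prior search---forward privacy of $\Psi_f$ is inherited directly from forward privacy of $\Sigma_f$. I expect this $T_{sig}$-consistency argument, rather than the routine PRF/PRG hops, to be the crux.
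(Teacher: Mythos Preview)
Your approach is essentially the same as the paper's: build the simulator on top of $\mathcal{S}_{\Sigma_f}$, argue that real signatures $g^{\alpha m r}$ with pseudorandom $r$ are indistinguishable from simulated $g^{\alpha r'}$ with truly random $r'$, and appeal to the pseudorandomness of $F$ for the tokens. The paper compresses all of this into two sentences plus a figure giving the simulator, whereas you spell it out as an explicit hybrid sequence; in particular, your random-oracle programming step for adaptive $T_{sig}$ consistency makes precise something the paper leaves implicit (its leakage function hands $pos^w_i,\sigma^w_i$ to the simulator at search time but never says how those values are reconciled with the $T_{sig}$ already committed at build time), so your write-up is the more complete of the two.
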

	
	\begin{proof}
		To prove the above theorem, it is sufficient to show that there exists a simulator $\mathtt{Sim}_{\Sigma_f}$ such that $\forall$ PPT adversary $\mathcal{A}$, the output of ${\textbf{Real}}_ \mathcal{A}(\lambda)$ and ${\textbf{Ideal}}_ {\mathcal{A},\mathtt{Sim}_{\Sigma_f}}(\lambda)$ are computationally indistinguishable.
		
		We  construct such a simulator $\mathtt{Sim}_{\Sigma_f}$ which adaptively simulates the extra data structure $T_{sig}$ and query tokens. 
		Let $\mathtt{Sim}_{\Sigma_f}$ be the simulator of the $\Sigma_f$.  
		We simulate the algorithms in Figure~\ref{fig_proof}.
		\begin{figure*}[!htbp]
			\centering
			\frame{\includegraphics[width=1.1\linewidth]{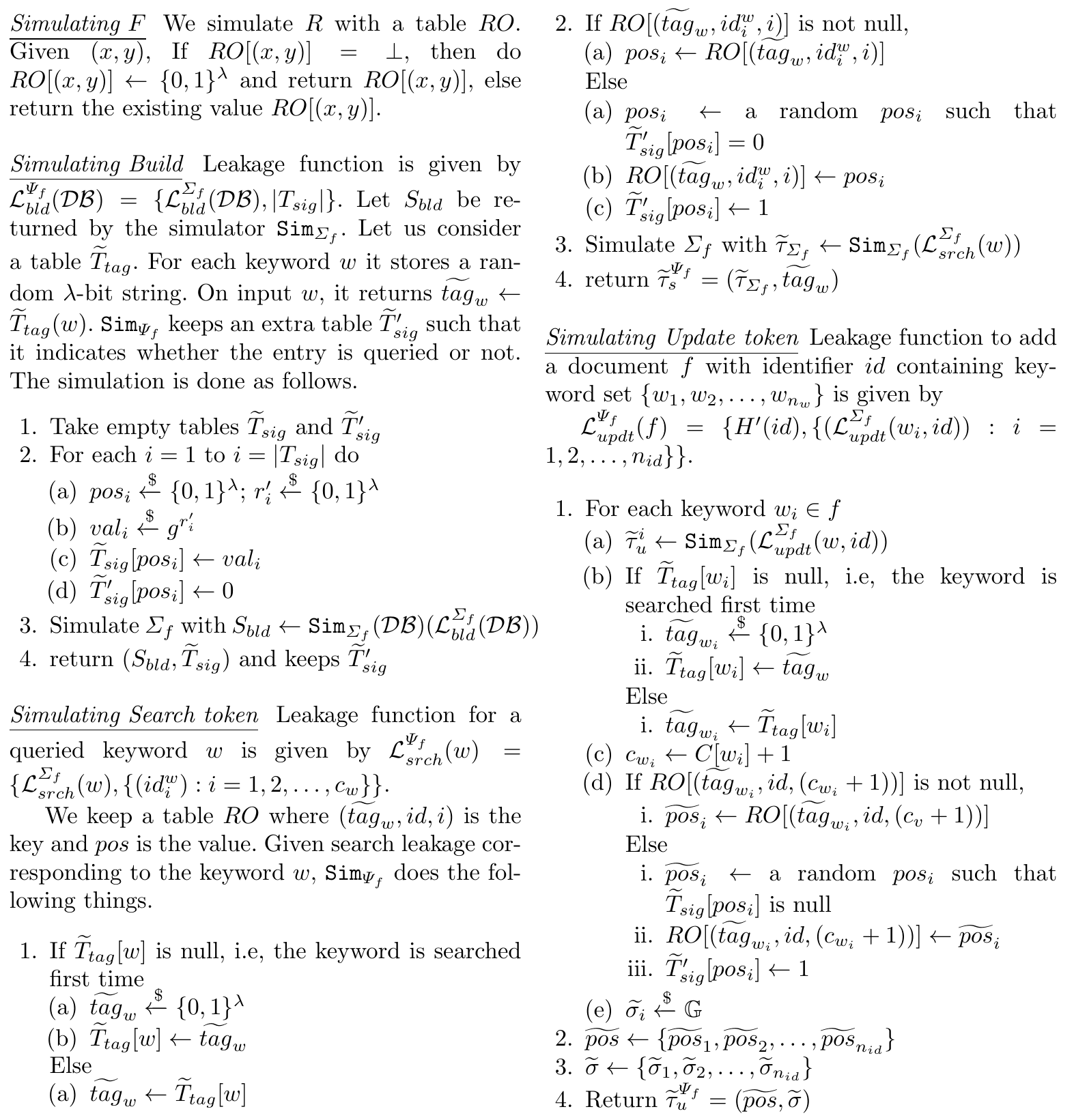}}
			\caption{Simulation of build, search token and update token}\label{fig_proof}
		\end{figure*}
		
		Since, in each entry, the signature generated in $T_{sig}$ is of the form $g^{\alpha mr}$ and  corresponding entry in $\widetilde{T}_{sig}$  is of the form $g^{\alpha r'}$, where $r$ is pseudo-random (as $R$ is so) and $r'$ is randomly taken, we can say that power of $g$ in both are indistinguishable. Hence, $T_{sig}$ and  $\widetilde{T}_{sig}$ are indistinguishable. 
		
		Besides, the indistinguishability of $\widetilde{ \tau}^{\Psi_f}_{u}$, $\widetilde{ \tau}^{\Psi_f}_{s} $ with respect to ${ \tau}^{\Psi_f}_{s}$, ${ \tau}^{\Psi_f}_u $ respectively follows from the pseudo-randomness of $F$. 
		
	\end{proof}

	\subsection{Deletion Support}
	$\Psi_f$ can be extended to deletion support by duplicating it. Together with $\Psi_f$ for addition, a duplicate $\Psi'_f$ can be kept for deleted files. During search, the auditor verifies both separately. The client gets result from both $\Psi_f$ and $\Psi'_f$, accepts only if both are verified and gets the final result calculating the difference.
	\section{Comparison with existing schemes} \label{sec_comparison} 
	Our generic VSSE $\Psi_s$ requires only one hash-value  computation to verify a search which is optimal. Again, during building, the owner  requires $2|\mathcal{W}|$ extra hash-value computation twice of the optimal. We can take that much computation to protect the scheme from malicious server without any extra client storage.  
	
	\begin{table}[!htbp]
		\centering
		\caption{Comparison of verifiable dynamic SSE schemes} \label{tab_comparison}		
		\resizebox{1.0\linewidth}{!}{
			\begin{tabular}{|c|c|c|c|c|c|c|c|c|c|c|}
				\hline
				Scheme 	 	& Forward & Public &\multicolumn{2}{|c|}{Extra Storage} & \multicolumn{3}{|c|}{Extra Computation}  		&  \multicolumn{2}{|c|}{Extra Communication} \\ \cline{4-10}
				Name 	 & privacy & verifiability  & owner 	& cloud & owner 	& cloud & auditor & owner 	& auditor\\ 
				\hline \hline
				Yoneyama and Kimura~\cite{icics/YoneyamaK17}  & $\checkmark$ &$\times$ & $O(|\mathcal{W}|)$ &$O(|\mathcal{W}| log |\mathcal{DB}|) $ & $O(|R_w|)$& $O(|R_w|)$ & -- & $O(1)$ & --\\  	
				\hline 
				Bost and Fouque~\cite{iacr/BostFP16}	 &$\times$  & $\times$  &$O(|\mathcal{W}|)$ & $O(|\mathcal{W}|)$&$O(|R_w|)$ &$O(1)$ &-- & $O(1)$&-- \\ 			
				\hline
				Miao et al.~\cite{isci/MiaoWWM19}	 &$\times$  &  $\checkmark$  &$O(|\mathcal{W}|)$ &$O(N + |\mathcal{W}|)$  & $O(|R_w|)$ & $O(|R_w|)$ & -- & $O(1)$& --\\	
				\hline
				Zhu et al.~\cite{trustcom/ZhuLW16}	 &$\times$  & $\times$  &$O(1)$  &$O(1)$ &$O(|R_w|)$ & $O(|R_w|+ N)$& -- &$O(|R_w|)$ & --\\	
				\hline
				Jiang et al.~\cite{globecom/JiangZGL15} & $\times$ &  $\checkmark$  & $O(1)$ & $O(|\mathcal{W}|)$ &$O(\log |\mathcal{W}|)$ & $O(|R_w|+ N)$& -- &$O(1)$ & --\\	
				\hline 
				$\Psi_f$ & $\checkmark$ & $\checkmark$ & $O(1)$ 			   & $O(N)$  			& $O(|R_w|)$	&$O(|R_w|)$ &$O(1)$&$O(1)$ 	& $O(1)$\\	
				\hline
			\end{tabular}
		}%
		\newline
		Where $N$ is the \#keyword-doc pairs. Here
		extra storage is calculated over all storage,
		extra communication and computation are for a single search.
	\end{table}

	We have compared our verifiable DSSE scheme $\Psi_f$ with verifiable dynamic schemes by 
	Yoneyama and Kimura~\cite{icics/YoneyamaK17}, 
	Bost and Fouque~\cite{iacr/BostFP16},
	Miao et al.~\cite{isci/MiaoWWM19},	
	Zhu et al.~\cite{trustcom/ZhuLW16} and
	Jiang et al.~\cite{globecom/JiangZGL15}. 
	The comparison is shown in Table~\ref{tab_comparison}. From the table, it can be observed that $\Psi_f$  is very efficient with respect to low resource owner. 
	Extra computation needed by the owner, to verify the search, is only $|R_w|$ multiplication which very less from the others.    
	The owner also does not require any extra storage than the built in forward secure DSSE scheme.

	\section{Conclusion} \label{Conclusion}
	Throughout, we have seen that we have successfully presented a privately verifiable SSE scheme and a publicly verifiable DSSE scheme. Both of them are simple and easy to implement. Moreover, the VDSSE scheme achieves forward secrecy.  
	In both of the scheme we have achieved our target to make efficient for low-resource owner. 
	Due to low computational and communication cost, we do need any auditor for VSSE. However, presence of an auditor, who verifies the search result, reduces workload of the owner. 
	Our proposed schemes are only for single keyword search queries. There are many other complex queries too. As a future work, one can design complex queried verifiable DSSE scheme. On the other hand, while designing, keeping them forward secret is also a challenging direction of research.  
	
	
	%
	%
	%


\begin{thebibliography}{10}
		
		\bibitem{cns/AzraouiEOM15}
		Monir Azraoui, Kaoutar Elkhiyaoui, Melek {\"{O}}nen, and Refik Molva.
		\newblock Publicly verifiable conjunctive keyword search in outsourced
		databases.
		\newblock In {\em 2015 {IEEE} Conference on Communications and Network
			Security, {CNS} 2015, Florence, Italy, September 28-30, 2015}, pages
		619--627, 2015.
		
		\bibitem{ccs/Bost16}
		Raphael Bost.
		\newblock {\(\sum\)}o{\(\varphi\)}o{\(\varsigma\)}: Forward secure searchable
		encryption.
		\newblock In {\em Proceedings of the 2016 {ACM} {SIGSAC} Conference on Computer
			and Communications Security, Vienna, Austria, October 24-28, 2016}, pages
		1143--1154, 2016.
		
		\bibitem{iacr/BostFP16}
		Raphael Bost, Pierre{-}Alain Fouque, and David Pointcheval.
		\newblock Verifiable dynamic symmetric searchable encryption: Optimality and
		forward security.
		\newblock {\em {IACR} Cryptology ePrint Archive}, 2016:62, 2016.
		
		\bibitem{ccs/BostMO17}
		Rapha{\"{e}}l Bost, Brice Minaud, and Olga Ohrimenko.
		\newblock Forward and backward private searchable encryption from constrained
		cryptographic primitives.
		\newblock In {\em Proceedings of the 2017 {ACM} {SIGSAC} Conference on Computer
			and Communications Security, {CCS} 2017, Dallas, TX, USA, October 30 -
			November 03, 2017}, pages 1465--1482, 2017.
		
		\bibitem{icc/ChaiG12}
		Qi~Chai and Guang Gong.
		\newblock Verifiable symmetric searchable encryption for
		semi-honest-but-curious cloud servers.
		\newblock In {\em Proceedings of {IEEE} International Conference on
			Communications, {ICC} 2012, Ottawa, ON, Canada, June 10-15, 2012}, pages
		917--922, 2012.
		
		\bibitem{asiacrypt10}
		Melissa Chase and Seny Kamara.
		\newblock Structured encryption and controlled disclosure.
		\newblock In {\em Advances in Cryptology - {ASIACRYPT} 2010 - 16th
			International Conference on the Theory and Application of Cryptology and
			Information Security, Singapore, December 5-9, 2010. Proceedings}, pages
		577--594, 2010.
		
		\bibitem{ccs/ChengYGZR15}
		Rong Cheng, Jingbo Yan, Chaowen Guan, Fangguo Zhang, and Kui Ren.
		\newblock Verifiable searchable symmetric encryption from indistinguishability
		obfuscation.
		\newblock In {\em Proceedings of the 10th {ACM} Symposium on Information,
			Computer and Communications Security, {ASIA} {CCS} '15, Singapore, April
			14-17, 2015}, pages 621--626, 2015.
		
		\bibitem{ccs/CurtmolaGKO06}
		Reza Curtmola, Juan~A. Garay, Seny Kamara, and Rafail Ostrovsky.
		\newblock Searchable symmetric encryption: improved definitions and efficient
		constructions.
		\newblock In {\em Proceedings of the 13th {ACM} Conference on Computer and
			Communications Security, {CCS} 2006, Alexandria, VA, USA, Ioctober 30 -
			November 3, 2006}, pages 79--88, 2006.
		
		\bibitem{globecom/JiangZGL15}
		Shunrong Jiang, Xiaoyan Zhu, Linke Guo, and Jianqing Liu.
		\newblock Publicly verifiable boolean query over outsourced encrypted data.
		\newblock In {\em 2015 {IEEE} Global Communications Conference, {GLOBECOM}
			2015, San Diego, CA, USA, December 6-10, 2015}, pages 1--6, 2015.
		
		\bibitem{ccs/KamaraPR12}
		Seny Kamara, Charalampos Papamanthou, and Tom Roeder.
		\newblock Dynamic searchable symmetric encryption.
		\newblock In {\em the {ACM} Conference on Computer and Communications Security,
			CCS'12, Raleigh, NC, USA, October 16-18, 2012}, pages 965--976, 2012.
		
		\bibitem{ijisec/LiZQLX18}
		Yuxi Li, Fucai Zhou, Yuhai Qin, Muqing Lin, and Zifeng Xu.
		\newblock Integrity-verifiable conjunctive keyword searchable encryption in
		cloud storage.
		\newblock {\em Int. J. Inf. Sec.}, 17(5):549--568, 2018.
		
		\bibitem{fgcs/LiuLLJL18}
		Zheli Liu, Tong Li, Ping Li, Chunfu Jia, and Jin Li.
		\newblock Verifiable searchable encryption with aggregate keys for data sharing
		system.
		\newblock {\em Future Generation Comp. Syst.}, 78:778--788, 2018.
		
		\bibitem{isci/MiaoWWM19}
		Meixia Miao, Jianfeng Wang, Sheng Wen, and Jianfeng Ma.
		\newblock Publicly verifiable database scheme with efficient keyword search.
		\newblock {\em Inf. Sci.}, 475:18--28, 2019.
		
		\bibitem{percom/MiaoMLJZSL17}
		Yinbin Miao, Jianfeng Ma, Ximeng Liu, Qi~Jiang, Junwei Zhang, Limin Shen, and
		Zhiquan Liu.
		\newblock {VCKSM:} verifiable conjunctive keyword search over mobile e-health
		cloud in shared multi-owner settings.
		\newblock {\em Pervasive and Mobile Computing}, 40:205--219, 2017.
		
		\bibitem{chinaf/MiaoMLZL17}
		Yinbin Miao, Jianfeng Ma, Ximeng Liu, Junwei Zhang, and Zhiquan Liu.
		\newblock {VKSE-MO:} verifiable keyword search over encrypted data in
		multi-owner settings.
		\newblock {\em {SCIENCE} {CHINA} Information Sciences},
		60(12):122105:1--122105:15, 2017.
		
		\bibitem{ppna/MiaoMWLWL17}
		Yinbin Miao, Jianfeng Ma, Fushan Wei, Zhiquan Liu, Xu~An Wang, and Cunbo Lu.
		\newblock {VCSE:} verifiable conjunctive keywords search over encrypted data
		without secure-channel.
		\newblock {\em Peer-to-Peer Networking and Applications}, 10(4):995--1007,
		2017.
		
		\bibitem{fc/OgataK17}
		Wakaha Ogata and Kaoru Kurosawa.
		\newblock Efficient no-dictionary verifiable searchable symmetric encryption.
		\newblock In {\em Financial Cryptography and Data Security - 21st International
			Conference, {FC} 2017, Sliema, Malta, April 3-7, 2017, Revised Selected
			Papers}, pages 498--516, 2017.
		
		\bibitem{dcc/SoleimanianK19}
		Azam Soleimanian and Shahram Khazaei.
		\newblock Publicly verifiable searchable symmetric encryption based on
		efficient cryptographic components.
		\newblock {\em Des. Codes Cryptography}, 87(1):123--147, 2019.
		
		\bibitem{ccs/SunYLSSVN18}
		Shifeng Sun, Xingliang Yuan, Joseph~K. Liu, Ron Steinfeld, Amin Sakzad, Viet
		Vo, and Surya Nepal.
		\newblock Practical backward-secure searchable encryption from symmetric
		puncturable encryption.
		\newblock In {\em Proceedings of the 2018 {ACM} {SIGSAC} Conference on Computer
			and Communications Security, {CCS} 2018, Toronto, ON, Canada, October 15-19,
			2018}, pages 763--780, 2018.
		
		\bibitem{infocom/SunLLH015}
		Wenhai Sun, Xuefeng Liu, Wenjing Lou, Y.~Thomas Hou, and Hui Li.
		\newblock Catch you if you lie to me: Efficient verifiable conjunctive keyword
		search over large dynamic encrypted cloud data.
		\newblock In {\em 2015 {IEEE} Conference on Computer Communications, {INFOCOM}
			2015, Kowloon, Hong Kong, April 26 - May 1, 2015}, pages 2110--2118, 2015.
		
		\bibitem{sdm10}
		Peter van Liesdonk, Saeed Sedghi, Jeroen Doumen, Pieter~H. Hartel, and Willem
		Jonker.
		\newblock Computationally efficient searchable symmetric encryption.
		\newblock In {\em Secure Data Management, 7th {VLDB} Workshop, {SDM} 2010,
			Singapore, September 17, 2010. Proceedings}, pages 87--100, 2010.
		
		\bibitem{esorics/Wang0SLAZ18}
		Jianfeng Wang, Xiaofeng Chen, Shifeng Sun, Joseph~K. Liu, Man~Ho Au, and
		Zhi{-}Hui Zhan.
		\newblock Towards efficient verifiable conjunctive keyword search for large
		encrypted database.
		\newblock In {\em Computer Security - 23rd European Symposium on Research in
			Computer Security, {ESORICS} 2018, Barcelona, Spain, September 3-7, 2018,
			Proceedings, Part {II}}, pages 83--100, 2018.
		
		\bibitem{corr/abs-1812-02386}
		Cheng Xu, Ce~Zhang, and Jianliang Xu.
		\newblock vchain: Enabling verifiable boolean range queries over blockchain
		databases.
		\newblock {\em CoRR}, abs/1812.02386, 2018.
		
		\bibitem{icics/YoneyamaK17}
		Kazuki Yoneyama and Shogo Kimura.
		\newblock Verifiable and forward secure dynamic searchable symmetric encryption
		with storage efficiency.
		\newblock In {\em Information and Communications Security - 19th International
			Conference, {ICICS} 2017, Beijing, China, December 6-8, 2017, Proceedings},
		pages 489--501, 2017.
		
		\bibitem{icws/0016XYL16}
		Rui Zhang, Rui Xue, Ting Yu, and Ling Liu.
		\newblock {PVSAE:} {A} public verifiable searchable encryption service
		framework for outsourced encrypted data.
		\newblock In {\em {IEEE} International Conference on Web Services, {ICWS} 2016,
			San Francisco, CA, USA, June 27 - July 2, 2016}, pages 428--435, 2016.
		
		\bibitem{usenix/ZhangKP16}
		Yupeng Zhang, Jonathan Katz, and Charalampos Papamanthou.
		\newblock All your queries are belong to us: The power of file-injection
		attacks on searchable encryption.
		\newblock In {\em 25th {USENIX} Security Symposium, {USENIX} Security 16,
			Austin, TX, USA, August 10-12, 2016.}, pages 707--720, 2016.
		
		\bibitem{trustcom/ZhuLW16}
		Xiaoyu Zhu, Qin Liu, and Guojun Wang.
		\newblock A novel verifiable and dynamic fuzzy keyword search scheme over
		encrypted data in cloud computing.
		\newblock In {\em 2016 {IEEE} Trustcom/BigDataSE/ISPA, Tianjin, China, August
			23-26, 2016}, pages 845--851, 2016.
		
	\end{thebibliography}
\end{document}